\documentclass[american,aps,prl,twocolumn,superscriptaddress]{revtex4}
\usepackage[T1]{fontenc}
\usepackage[latin9]{inputenc}
\usepackage{babel}
\usepackage{amsmath}
\usepackage{mathrsfs,bm,amssymb,amsbsy,graphicx,amsthm}
\usepackage{hyperref}
\usepackage{times}
\usepackage{braket}

\makeatletter

\theoremstyle{plain}
\newtheorem{thm}{\protect\theoremname}

\makeatother

\providecommand{\theoremname}{Theorem}

\newcommand{\tr}{{\rm Tr}}

\newcommand{\proj}[1]{\ket{#1}\bra{#1}}

\begin{document}

\title{Are general quantum correlations monogamous?}

\author{Alexander Streltsov}
\affiliation{Heinrich-Heine-Universit\"at D\"usseldorf, Institut f\"ur Theoretische Physik III, D-40225 D\"usseldorf, Germany}

\author{Gerardo Adesso}
\affiliation{$\mbox{School of Mathematical Sciences, University of Nottingham, University Park, Nottingham NG7 2RD, United Kingdom}$}

\author{Marco Piani}
\affiliation{$\mbox{Institute for Quantum Computing and Department of Physics and Astronomy, University of Waterloo, Waterloo ON
N2L 3G1, Canada}$}

\author{Dagmar Bru{\ss}}
\affiliation{Heinrich-Heine-Universit\"at D\"usseldorf, Institut f\"ur Theoretische Physik III, D-40225 D\"usseldorf, Germany}

\begin{abstract}
Quantum entanglement and quantum non-locality are known to exhibit monogamy, that is, they obey strong constraints on how they can be distributed among multipartite systems.
Quantum correlations that comprise and go beyond entanglement are quantified by, e.g., quantum discord. It was observed
 recently that for some states quantum discord is not monogamous. We prove in general that any measure of
correlations that is  monogamous for all states and satisfies reasonable basic properties
must vanish for all separable states: only entanglement measures can be strictly monogamous.
Monogamy of other than entanglement measures can still be satisfied for special, restricted cases:
we prove that the geometric measure of discord satisfies the monogamy inequality
on all pure states of three qubits.
\end{abstract}

\maketitle

Entanglement, nonclassical correlations, and nonlocal correlations,
are all forms of correlations between two or more subsystems of a composite
quantum system that are different from strictly classical correlations,
and in general different from each other. One of the characteristic
traits of classical correlations is that they can be freely shared.
A party $A$ can have maximal classical correlations with two parties
$B$ and $C$ simultaneously. This is no longer the case if quantum
entanglement or nonlocal correlations are concerned \cite{Terhal2004}.
The limits on the shareability of those types of nonclassical correlations are known as monogamy constraints, see Fig. \ref{fig:1} for illustration. Strict monogamy inequalities
have been proven that constrain the distribution of particular measures
of entanglement and nonlocal correlations (the latter expressed in
terms of violation of some Bell-type inequality \cite{Clauser1969})
among the subsystems of a multipartite system \cite{Coffman2000,Osborne2006,Adesso2006,Hiroshima2007,Toner2009,Barrett2006,Pawlowski2009,Koashi2004,Seevinck2010}.
These relations can be seen as a particular case of trade-off relations
that in general may relate and constrain different quantifiers of
correlations \cite{Koashi2004,Horodecki2007}. Monogamy
is the crucial property of correlations that makes quantum key distribution
secure \cite{Terhal2004,Pawlowski2010}, even in no-signalling theories
more general than quantum mechanics.

Nonclassical correlations that go beyond entanglement, often quantified e.g.~via the quantum
discord \cite{Ollivier2001,Henderson2001}, have recently
attracted considerable attention \cite{Merali2011,Modi2011}. While entanglement captures
the non-separability of two subsystems \cite{Plenio2007,Horodecki2009},
quantum discord detects nonclassical properties even in separable
states. Different attempts were presented to connect the new concept
of quantum discord to quantum entanglement
\cite{Madhok2011,Cavalcanti2011,Streltsov2011,Piani2011,Cornelio2011,Fanchini2011,Piani2011a},
and to broadcasting \cite{Piani2008,Piani2009,Luo2010b}. Several experimental results
have  been reported in \cite{Lanyon2008,Xu2010,Auccaise2011,Passante2011}. Quantum
discord, as well as related quantifiers of quantum correlations
\cite{Modi2011,Oppenheim2002,Rajagopal2002,Luo2008,Wu2009,Modi2010,Dakic2010,Rossignoli2010,Streltsov2011,Piani2011,Streltsov2011a,Giorgi2011a,Lang2011, Xu2011},
have also  been linked to better-than-classical performance in quantum computation
and communication tasks, even in the presence of limited or strictly
vanishing entanglement \cite{Knill1998,DiVincenzo2004,Datta2008,Lanyon2008,Brodutch2011,Chaves2011,Eastin2010,Modi2010a,Madhok2011a,Boixo2011}.
 An important question to understand the role of quantum correlations as signatures of genuine
nonclassical  behavior is whether they distribute
in a monogamous way among multipartite systems.

A bipartite measure of correlations ${\cal Q}$ satisfies {\em monogamy}
if \cite{Coffman2000,Horodecki2009}
\begin{equation}
{\cal Q}^{A|BC}(\rho_{ABC})\geq{\cal Q}^{A|B}(\rho_{AB})+{\cal Q}^{A|C}(\rho_{AC})\label{eq:monogamy}
\end{equation}
holds for all states $\rho_{ABC}$.
Here $\rho_{AB}=\tr_C(\rho_{ABC})$ denotes the reduced state of parties $A$ and $B$, and
analogously for   $\rho_{AC}$. The vertical bar is the familiar notation for the bipartite
split. The concept of monogamy is visualized in Fig. \ref{fig:1}.

\begin{figure}
\begin{centering}
\includegraphics[width=0.5\columnwidth]{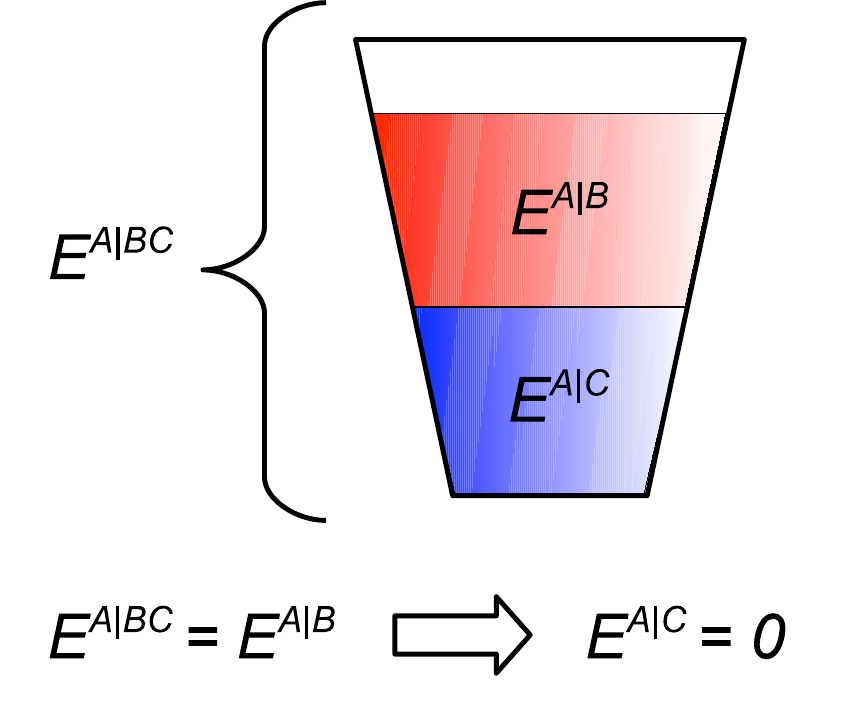}
\par\end{centering}

\caption{\label{fig:1}[Color online] Entanglement is monogamous: for a fixed amount of entanglement between $A$ and $BC$, the more entanglement exists between $A$ and $B$, the less can exist between $A$ and $C$. Quantitatively
this is expressed using the monogamy relation, see Eq. \eqref{eq:monogamy}
in the main text. In particular, the latter implies---for a monogamous measure of entanglement $E$---that $E^{A|C}=0$ if  $E^{A|BC}=E^{A|B}$. In this Letter we show that the monogamy relation
does not hold in general for any quantum correlation measure beyond entanglement, i.e. for any measure
that does not vanish on  separable states.}
\end{figure}

If ${\cal Q}$ denotes in particular an entanglement measure \cite{Plenio2007,Horodecki2009},
then there are a number of choices that satisfy monogamy for pure states of qubits, including
the squared concurrence \cite{Coffman2000}, and the squared negativity
\cite{Ou2007}, as well as their continuous variable counterparts
for multimode Gaussian states \cite{Adesso2006,Hiroshima2007}. The only known measure that is monogamous in all dimensions is the squashed entanglement  \cite{Christandl2004,Koashi2004}.
Other entanglement measures such as, e.g., the entanglement of formation do not satisfy
the monogamy relation  \cite{Coffman2000}. There is no known {\em a priori} rule
about whether a given entanglement measure is monogamous or not.
It is natural to ask whether a given measure for general quantum correlations is
monogamous. Certain
measures of general quantum correlations, such as quantum discord,
were shown to violate monogamy by finding explicit examples
of states for which the inequality \eqref{eq:monogamy} does not hold \cite{Giorgi2011,Prabhu2011,Prabhu2011a,Sudha2011,Allegra2011,Ren2011}.
Those examples, however, do not  exclude the possibility that other measures of
quantum correlations, akin to the quantum discord, could exist
that do satisfy a monogamy inequality.

In this Letter we address the issue of whether monogamy, in general, can extend to general quantum correlations beyond entanglement. Quantitatively this question
can be formulated as follows:
\emph{Does there exist a measure of
correlations $\cal{Q}$ which obeys the monogamy relation \eqref{eq:monogamy} and is nonzero on a separable state?}
We will put this question to rest by proving that
all measures for quantum correlations beyond entanglement
(i.e., that are non-vanishing on at least some separable state)
and that respect some basic properties are {\em not} monogamous in
general. These basic properties of the correlation measure $\cal{Q}$
are the following:
\begin{itemize}
\item positivity, i.e.
\begin{equation}
\label{eq:positivity}
{\cal Q}^{A|B}\left(\rho_{AB}\right)\geq 0;
\end{equation}
\item invariance under local unitaries $U_A\otimes V_B$, i.e.
\begin{equation}
\label{eq:LUinvariance}
{\cal Q}^{A|B}\left(\rho_{AB}\right) = {\cal Q}^{A|B}\left(U_A\otimes V_B \rho_{AB} U^\dagger_A\otimes V^\dagger_B\right);
\end{equation}
\item no-increase upon attaching a local ancilla, i.e.
\begin{align}
{\cal Q}^{A|B}\left(\rho_{AB}\right) & \geq{\cal Q}^{A|BC}\left(\rho_{AB}\otimes\ket{0}\bra{0}_{C}\right).\label{eq:ancilla}
\end{align}
\end{itemize}
These properties are valid for several measures of
correlations known in the literature, including all entanglement measures~\cite{Plenio2007,Horodecki2009}. In particular, positivity and invariance under local unitaries are standard requirements \cite{Brodutch2011a}. For the quantum discord
defined in Ref. \cite{Ollivier2001,Henderson2001}, which is an asymmetric quantity, Eq.~\eqref{eq:ancilla}
can be verified by inspection, and is valid independently of whether the ancilla is attached on the side where the measurement entering the definition of discord is to be performed, or on the unmeasured side. In a more general
scenario, quantum correlations can be defined as the minimal distance
to the set of classically correlated states \cite{Modi2010,Dakic2010,Piani2011,Streltsov2011a}.
In this case Eq.~\eqref{eq:ancilla} follows
from the fact that any "reasonable" distance does not change upon attaching an
ancilla: $D\left(\rho,\sigma\right)=D\left(\rho\otimes\ket{0}\bra{0},\sigma\otimes\ket{0}\bra{0}\right)$.
The same arguments can be applied to measures which are defined via
measurements on local subsystems \cite{Luo2008}. Alternatively, quantum correlations may be investigated and quantified in terms of the minimal
amount of entanglement necessarily created between the system and a measurement
apparatus realizing a complete projective measurement~\cite{Streltsov2011,Piani2011,Piani2011a,Gharibian2011}. Eq. \eqref{eq:ancilla} also holds in this case, which can be seen
solely using the properties of entanglement measures.

We are now in position
to prove the following theorem.
\begin{thm}
\label{thm:1}A measure of correlations ${\cal Q}$ that respects Eqs. \eqref{eq:positivity},
\eqref{eq:LUinvariance}, and \eqref{eq:ancilla}, and is also monogamous according to \eqref{eq:monogamy}, must vanish for all separable states.\end{thm}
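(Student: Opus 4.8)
The plan is to derive a contradiction from assuming $q:={\cal Q}^{A|B}(\rho_{AB})>0$ for some separable $\rho_{AB}=\sum_i p_i\,\rho_A^i\otimes\rho_B^i$, by exhibiting a single multipartite state in which party $A$ is simultaneously correlated with arbitrarily many parties, each as strongly as in $\rho_{AB}$, while the total $A$-versus-rest correlation stays bounded. First I would attach a classical ``flag'' register $R$ on the $B$ side recording the index $i$, i.e.\ consider $\eta_{ARB}=\sum_i p_i\,\rho_A^i\otimes\proj{i}_R\otimes\rho_B^i$. The register does two jobs: the conditional states on the $RB$ side become mutually orthogonal, so that operations controlled on $R$ are legitimate unitaries; and, at the same time, tracing out $R$ returns the genuine marginal $\rho_{AB}$.

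Next, starting from $\eta_{ARB}$ I would build an $n$-party state by attaching $n-1$ fresh ancillas $B_2E_2,\dots,B_nE_n$ in $\proj{0}$ and applying a single $B$-side unitary, controlled on $R$, that prepares in each block $B_kE_k$ a purification $\ket{\phi_i}$ of $\rho_B^i$. Because this controlled preparation only reads the orthogonal flag, it is a bona fide local unitary on the $B$ side, so \eq{eq:ancilla} and \eq{eq:LUinvariance} give that the resulting state $\omega'$ obeys ${\cal Q}^{A|RB_1B_2E_2\cdots}(\omega')\le{\cal Q}^{A|RB}(\eta_{ARB})=:c$, a constant independent of $n$ (note that $c$ need not vanish—only its finiteness is used). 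Crucially, tracing out everything except $A$ and one block $B_k$ with $k\ge2$ returns exactly $\rho_{AB}$, so by \eq{eq:LUinvariance} each such reduced state carries correlation $q$.

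Finally I would iterate the monogamy relation \eq{eq:monogamy} across the partition $A\,|\,RB_1\,|\,B_2\,|\cdots|\,B_n\,|\,E_2\cdots E_n$ of $\omega'$; discarding the terms for $RB_1$ and for the $E$'s, which are nonnegative by \eq{eq:positivity}, leaves ${\cal Q}^{A|\mathrm{rest}}(\omega')\ge(n-1)\,q$. Combining with the upper bound yields $(n-1)\,q\le c$ for every $n$, which forces $q=0$. The main obstacle, and the reason a naive argument fails, is that a generic separable state has non-orthogonal $\rho_B^i$ and therefore cannot be broadcast to many parties by unitaries and ancillas alone, while the available axioms permit only adjoining subsystems, never discarding them. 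The flag register is exactly what circumvents both difficulties: being orthogonal it makes the multiplication of copies unitary, and through \eq{eq:ancilla} it keeps the global correlation bounded, yet it is invisible to the marginals that power the monogamy lower bound. I would finally verify that the controlling unitary acts strictly on the $B$ side (so that \eq{eq:LUinvariance} applies) and that purity of the $\ket{\phi_i}$ serves only to make the preparation unitary, the reductions being unaffected.
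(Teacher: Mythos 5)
Your proof is correct, but it takes a genuinely different route from the paper's. The paper's argument is one-shot and three-partite: it flags the separable state as in \eq{eq:rho}, notes that a $BC$-local controlled unitary maps this extension to $\sigma_{AB}\otimes\proj{0}_C$ with $\sigma_{AB}=\rho_{AB}$, so that \eq{eq:LUinvariance} and \eq{eq:ancilla} give ${\cal Q}^{A|BC}(\rho_{ABC})\leq{\cal Q}^{A|B}(\rho_{AB})$, and then a \emph{single} application of \eq{eq:monogamy} forces ${\cal Q}^{A|C}(\rho_{AC})=0$ exactly. Your argument instead broadcasts the state to $n$ parties and takes $n\to\infty$: it is essentially an axiom-powered version of the paper's \emph{secondary} argument based on symmetric extensions, Eqs.~\eqref{eq:bounded}--\eqref{eq:Werner}. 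Your genuinely nice twist is that, where the paper's symmetric-extension argument must \emph{assume} the boundedness condition \eqref{eq:bounded} (because a generic extension's total correlation could grow with $n$), you manufacture the extension unitarily from the fixed flagged state $\eta_{ARB}$ plus ancillas, so that \eq{eq:ancilla} and \eq{eq:LUinvariance} cap ${\cal Q}^{A|\mathrm{rest}}$ at a constant $c$ independent of $n$; this recovers the full strength of Theorem~\ref{thm:1} along the many-party route (both proofs tacitly use finiteness of ${\cal Q}$ on one specific flagged state---the paper needs it to cancel ${\cal Q}^{A|B}(\rho_{AB})$ from both sides, you need $c<\infty$). What the asymptotic route gives up is sharpness in the number of parties: the paper's proof shows monogamy already fails for a \emph{tripartite} (even three-qubit) state whenever ${\cal Q}$ is nonzero on a rank-two separable two-qubit state, a point the paper stresses when contrasting its two arguments, whereas your construction only exhibits a violation on some $(2n)$-partite state with $n$ large, and yields ${\cal Q}^{A|B}(\rho_{AB})=0$ only in the limit rather than from a single inequality.
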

\begin{proof}
Consider a measure $\mathcal{Q}$ respecting the hypothesis, and a generic separable state $\rho_{AC}=\sum_{i}p_{i}\ket{\psi_{i}}\bra{\psi_{i}}_{A}\otimes\ket{\phi_{i}}\bra{\phi_{i}}_{C}$.
In the following we will concentrate on a special extension of $\rho_{AC}$,
defined as
\begin{equation}
\rho_{ABC}=\sum_{i}p_{i}\ket{\psi_{i}}\bra{\psi_{i}}_{A}\otimes\ket{i}\bra{i}_{B}\otimes\ket{\phi_{i}}\bra{\phi_{i}}_{C},\label{eq:rho}
\end{equation}
with orthogonal states $\left\{ \ket{i}_{B}\right\} $. Observe that
$\rho_{ABC}$ has the same amount of correlations ${\cal Q}^{A|BC}$ as
the state
\begin{equation}
\sigma_{ABC}=\sum_{i}p_{i}\ket{\psi_{i}}\bra{\psi_{i}}_{A}\otimes\ket{i}\bra{i}_{B}\otimes\ket{0}\bra{0}_{C},
\label{eq:sigma}
\end{equation}
since both states are related by a local unitary on $BC$. On the
other hand, Eq. \eqref{eq:ancilla} implies that $\sigma_{ABC}$ does not have more correlations than the reduced state $\sigma_{AB}$. Taking these
two observations together we obtain ${\cal Q}^{A|B}\left(\sigma_{AB}\right)\geq{\cal Q}^{A|BC}\left(\rho_{ABC}\right)$.
Now we invoke the monogamy relation for the state $\rho_{ABC}$, which leads
us to the inequality
\begin{equation}
{\cal Q}^{A|B}\left(\sigma_{AB}\right)\geq{\cal Q}^{A|B}\left(\rho_{AB}\right)+{\cal Q}^{A|C}\left(\rho_{AC}\right).
\end{equation}
The final ingredient in the proof is the fact that the two states
$\rho_{AB}$ and $\sigma_{AB}$ are equal. From the positivity of the measure it follows immediately
that ${\cal Q}^{A|C}$ must vanish on the state $\rho_{AC}$. Since the latter is a generic separable state, ${\cal Q}$ must vanish on all separable states.
\end{proof}

The power of Theorem \ref{thm:1} lies in its generality. Under very weak assumptions
it rules out the existence of monogamous correlations beyond entanglement.
Note that the arguments used in the proof of Theorem \ref{thm:1} are strong
enough to show that the violation of monogamy appears even in three-qubit
systems. This can be seen starting from Eq. (\ref{eq:rho}), with each subsystem
being a qubit. The measure ${\cal Q}$ violates monogamy, if it is nonzero on some separable two-qubit state of rank two.
This is the case for quantum discord and any related measures of quantum
correlations.

As we have argued below Eq. (\ref{eq:ancilla}), the properties (\ref{eq:positivity}-\ref{eq:ancilla}) are satisfied by all reasonable measures of quantum correlations known to the authors. However, in general it cannot be excluded that the measure under study violates one of the properties given in Eq. (\ref{eq:positivity}),
(\ref{eq:LUinvariance}), or (\ref{eq:ancilla}). Alternatively, we assume that some of these properties cannot be proven. In this situation, Theorem \ref{thm:1} does not tell us whether ${\cal Q}$ is monogamous
or not. Then, it is still possible to show that a monogamous measure ${\cal Q}$ must be zero
on all separable states, if it remains finite for a fixed dimension
of one subsystem, i.e. if 
\begin{equation}
\mathcal{Q}^{A|B}\leq f\left(d_{A}\right)<\infty\label{eq:bounded}
\end{equation}
for fixed $d_{A}$, and some function $f$. To see this we use the fact that any separable
state $\rho_{AB}$ has a symmetric extension $\rho_{AB_{1}\cdots B_{n}}$
such that $\rho_{AB}=\rho_{AB_{i}}$ holds for all $1\leq i\leq n$, where $n$ is an arbitrary positive integer
\cite{Werner1989,Werner1989a,Doherty2004,Dong2006}. Eq. (\ref{eq:bounded}) implies that the measure ${\cal Q}^{A|B_{1}\cdots B_{n}}\left(\rho_{AB_{1}\cdots B_{n}}\right)$ is finite for all $n$, including the limit $n \rightarrow \infty$. On the other hand, if $\cal Q$ is monogamous, it has to fulfill the following inequality: 
\begin{equation}
{\cal Q}^{A|B_{1}\cdots B_{n}}\left(\rho_{AB_{1}\cdots B_{n}}\right)\geq n{\cal Q}^{A|B}\left(\rho_{AB}\right).\label{eq:Werner}
\end{equation}
However, if the measure ${\cal Q}$ is nonzero on the separable state $\rho_{AB}$, one can always choose some $n$ which is large enough such that Eq. (\ref{eq:Werner})
is violated, and thus ${\cal Q}$ cannot be monogamous.

So far we have presented two different ways to show that a given measure of quantum correlations $\cal{Q}$ violates monogamy, namely Theorem \ref{thm:1} and Eq. (\ref{eq:bounded}). At this stage it is natural to ask whether these two results have the same power, i.e. whether they allow to draw the same conclusions about the structure of a given measure $\cal{Q}$. As already noted above, the proof of Theorem \ref{thm:1} allows to rule out monogamy even for the simplest case of three qubits, as long as the measure $\cal{Q}$ does not vanish on some separable state of two qubits having rank not larger than two. On the other hand, this argument does not apply to Eq. (\ref{eq:bounded}) and (\ref{eq:Werner}). Indeed, if $\cal{Q}$ is nonzero on some separable two-qubit state $\rho_{AB}$, Eq. (\ref{eq:bounded}) and (\ref{eq:Werner}) only allow the statement that the measure $\cal{Q}$ violates monogamy for some extension $\rho_{AB_1\ldots B_n}$. In particular, if $n>2$, this result does not provide any insight about the monogamy of the measure for three-qubit states.

We move on to observe that monogamy (Eq. \eqref{eq:monogamy}), together with positivity (Eq. \eqref{eq:positivity}), invariance under local unitary (Eq. \eqref{eq:LUinvariance}) and no-increase under attaching a local ancilla (Eq. \eqref{eq:ancilla}), imply no-increase under local operations. This is due to the fact that any quantum operation $\Lambda$ admits a Stinespring dilation:
$\Lambda[\rho_B]=\tr_C\left(U_{BC}\rho_B\otimes \proj{0}_C  U_{BC}^\dagger\right)$, i.e. any quantum
operation can be seen as resulting from a unitary operation on a larger-dimensional Hilbert space.
Thus, for ${\cal Q}$ respecting  Eqs. \eqref{eq:monogamy}, \eqref{eq:positivity}, \eqref{eq:LUinvariance}, and \eqref{eq:ancilla}, one finds
\begin{eqnarray}
{\cal Q}^{A|B}\left(\rho_{AB}\right)
&\geq &
{\cal Q}^{A|BC}\left(\rho_{AB}\otimes\ket{0}\bra{0}_{C}\right)\nonumber \\
&=&
{\cal Q}^{A|BC}\left(U_{BC} \rho_{AB}\otimes\ket{0}\bra{0}_{C}U_{BC}^\dagger \right)\nonumber\\
&\geq&
{\cal Q}^{A|B}\left(\tr_C\left(U_{BC} \rho_{AB}\otimes\ket{0}\bra{0}_{C}U_{BC}^\dagger \right)\right)\nonumber\\
&\quad+&
{\cal Q}^{A|C}\left(\tr_B\left(U_{BC} \rho_{AB}\otimes\ket{0}\bra{0}_{C}U_{BC}^\dagger \right)\right)\nonumber\\
&\geq&
{\cal Q}^{A|B}\left(\Lambda_B[\rho_{AB}]\right).
\end{eqnarray}
No-increase under local operations~\footnote{To be precise, no-increase just under operations on one side.}, and thus, \emph{a fortiori}, monogamy (the latter together with the almost trivial properties \eqref{eq:positivity}, \eqref{eq:LUinvariance}, and \eqref{eq:ancilla}) imply the following

\begin{thm}
A measure of correlations ${\cal Q}$ that is non-increasing under operations on at least one side must be maximal on pure states; that is, for any $\rho_{AB}$ on $\mathbb{C}^d\otimes \mathbb{C}^d$ there exists a pure state $\proj{\psi}_{AB}\in\mathbb{C}^d\otimes \mathbb{C}^d$ such that ${\cal Q}^{A|B}(\proj{\psi}_{AB})\geq {\cal Q}^{A|B}(\rho_{AB})$.\end{thm}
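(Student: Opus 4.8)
The plan is to reduce the theorem to a single reachability statement: \emph{every} state $\rho_{AB}$ on $\mathbb{C}^d\otimes\mathbb{C}^d$ can be produced from some pure state on the \emph{same} space by a quantum operation acting on one side only. Without loss of generality I assume $\mathcal{Q}$ is non-increasing under operations on side $B$ (the case of side $A$ is identical after exchanging the two parties and purifying the $B$-marginal instead). Concretely, I aim to exhibit a pure state $\proj{\psi}_{AB}$ on $\mathbb{C}^d\otimes\mathbb{C}^d$ and a channel $\Lambda_B$ on $B$ with $\rho_{AB}=(\mathrm{id}_A\otimes\Lambda_B)(\proj{\psi}_{AB})$. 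Once this is in place, the hypothesis (non-increase under operations on $B$) gives $\mathcal{Q}^{A|B}(\proj{\psi}_{AB})\geq\mathcal{Q}^{A|B}(\rho_{AB})$, which is precisely the asserted maximality on pure states.

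To build the candidate pure state I would start from the reduced state $\rho_A=\tr_B(\rho_{AB})$. Since $\rho_A$ acts on $\mathbb{C}^d$, its rank is at most $d$, so it admits a purification $\ket{\psi}_{AB}$ living on $\mathbb{C}^d\otimes\mathbb{C}^d$, i.e.\ with the purifying register fitting exactly into the $B$-system available to us. By construction $\tr_B(\proj{\psi}_{AB})=\rho_A=\tr_B(\rho_{AB})$, so the candidate and the target state share the same $A$-marginal.

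The main step is to convert this shared-marginal property into a one-sided operation. Here I would invoke the standard fact that any two purifications of the same state are related by an isometry on the purifying system. Taking a purification $\ket{\Psi}_{ABE}$ of $\rho_{AB}$ with some environment $E$, this $\ket{\Psi}_{ABE}$ is simultaneously a purification of $\rho_A$; since $\ket{\psi}_{AB}$ also purifies $\rho_A$, there exists an isometry $V:B\to B\otimes E$ with $\ket{\Psi}_{ABE}=(\mathrm{id}_A\otimes V)\ket{\psi}_{AB}$. Tracing out $E$ yields $\rho_{AB}=\tr_E\big((\mathrm{id}_A\otimes V)\proj{\psi}_{AB}(\mathrm{id}_A\otimes V^\dagger)\big)$, so the completely positive trace-preserving map $\Lambda_B(\cdot)=\tr_E(V\cdot V^\dagger)$ realizes exactly the desired local channel on $B$, completing the reduction.

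I expect the delicate point to be the dimensional bookkeeping rather than the structural argument, since the isometry $V$ naturally lands in the enlarged space $B\otimes E$ and one must make sure the pure state to which it is applied genuinely lives on $\mathbb{C}^d\otimes\mathbb{C}^d$, not on a larger $B$-register. This is guaranteed precisely because $\mathrm{rank}(\rho_A)\leq d$ allows the purification of the marginal to fit on side $B$, while all the extra room needed to account for the mixedness of $\rho_{AB}$ is absorbed into the environment $E$ that is subsequently discarded. Once this is verified, the hypothesis of non-increase under one-sided operations---which the excerpt derived as a consequence of monogamy together with \eqref{eq:positivity}, \eqref{eq:LUinvariance}, and \eqref{eq:ancilla}---delivers the maximality on pure states directly.
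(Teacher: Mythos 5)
Your proposal is correct and follows essentially the same route as the paper: the paper's proof is exactly the observation that any $\rho_{AB}$ arises from a channel on one side applied to a purification of the other side's marginal, combined with non-increase under one-sided operations. The only difference is that you spell out the standard purification/isometry argument (which the paper simply cites), and your dimensional bookkeeping via $\mathrm{rank}(\rho_A)\leq d$ is handled correctly.
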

\begin{proof}
Immediate when one uses the fact that any state $\rho_{AB}$ can be seen as the result of the application of a channel $\Lambda_B$ ($\Lambda_A$) on any purification $\ket{\psi}_{AB}$ of $\rho_{A}$ ($\rho_B$) (see, for example,~\cite{Christandl2004}). Suppose that the measure ${\cal Q}$ is non-increasing under quantum operations  on $A$. Then:
\begin{equation}
{\cal Q}^{A|B}(\proj{\psi}_{AB})\geq {\cal Q}^{A|B}(\Lambda_A[\proj{\psi}_{AB}])={\cal Q}^{A|B}(\rho_{AB}).
\end{equation}

\end{proof}
This simple theorem is relevant, in particular, for the case of \emph{symmetric} measures of quantum correlations. Several
such measures were proposed in Refs. \cite{Modi2010,Piani2011,Streltsov2011a}.
Some of these measures have counterintuitive properties. In particular, in  \cite{Piani2011} it was shown
that for the relative entropy of quantumness
there exist mixed states $\rho_{AB}$ which have more quantum
correlations than any pure state $\ket{\psi_{AB}}$. The just proven theorem can be interpreted as a signature of the fact that general quantum correlations can increase under local operations (and \emph{a fortiori} as a signature of the lack of monogamy) \cite{Streltsov2011a}.

Theorem \ref{thm:1} and the reasoning in its proof amount essentially to the following insight about the violation of monogamy:
if there is a separable state $\rho_{AB}$ with nonzero correlations ${\cal Q}$, then there exists a \emph{mixed}
state $\rho_{ABC}$ which proves that the measure under scrutiny is not monogamous: ${\cal Q}^{A|BC}\left(\rho_{ABC}\right)<{\cal Q}^{A|B}\left(\rho_{AB}\right)+{\cal Q}^{A|C}\left(\rho_{AC}\right)$.
On the other hand, crucially, a measure of correlations can still respect monogamy when evaluated on pure states
$\rho_{ABC}=\ket{\psi}\bra{\psi}_{ABC}$. As will be demonstrated
in the following, the geometric measure of discord has exactly this
property for three qubits. Before we present this result, we recall the definition
of this measure.

The geometric measure of discord $D_G$ was defined in Ref. \cite{Dakic2010}
as the minimal Hilbert-Schmidt distance to the set of classical-quantum states (CQ):
\begin{equation}
D_{G}^{A|B}\left(\rho_{AB}\right)=\min_{\sigma_{AB}\in CQ}\left\Vert \rho_{AB}-\sigma_{AB}\right\Vert ^{2}_2.\label{eq:D}
\end{equation}
Here we used the 2-norm, also known as Hilbert-Schmidt norm, $\left\Vert \rho-\sigma\right\Vert _2=\sqrt{\mathrm{Tr}\left(\rho-\sigma\right)^{2}}$,
and the minimum is taken over all classical-quantum states $\sigma_{AB}$.
These are states which can be written as $\sigma_{AB}=\sum_{i}p_{i}\ket{i}\bra{i}_{A}\otimes\sigma_{B}^{i}$
with some local orthogonal basis $\left\{ \ket{i_{A}}\right\} $. The geometric discord has an operational interpretation in terms of the average fidelity of the remote state preparation protocol for two-qubit systems \cite{Dakic2012}. As noted above,
the geometric measure of discord cannot be monogamous in general,
since it is nonzero on some separable states. However, this measure
is monogamous for all pure states of three qubits.
\begin{thm}
\label{thm:3}The geometric measure of discord is monogamous for all
pure states $\ket{\psi}_{ABC}$ of three qubits:
\begin{equation}
D_{G}^{A|BC}\left(\ket{\psi}\bra{\psi}_{ABC}\right)\geq D_{G}^{A|B}\left(\rho_{AB}\right)+D_{G}^{A|C}\left(\rho_{AC}\right),\label{eq:monogamy-pure}
\end{equation}
where $\rho_{AB}=\mathrm{Tr}_C(\ket{\psi}\bra{\psi}_{ABC})$ and analogously for $\rho_{AC}$.\end{thm}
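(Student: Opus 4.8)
The plan is to exploit that, by its definition \eqref{eq:D}, the geometric discord can be written as a minimisation over dephasings, $D_{G}^{A|B}(\rho_{AB})=\min_{\Pi^{A}}\|\rho_{AB}-\Pi^{A}(\rho_{AB})\|_{2}^{2}$, where $\Pi^{A}$ runs over all projective measurements on the qubit $A$ and $\Pi^{A}(\cdot)=\sum_{k}(\proj{e_{k}}_{A}\otimes I)(\cdot)(\proj{e_{k}}_{A}\otimes I)$ is the induced dephasing (the closest classical--quantum state for a fixed basis is the dephased state). For the pure global state the optimal $\Pi^{\star}$ is the dephasing in the eigenbasis of $\rho_{A}$, i.e.\ the Schmidt basis of the $A|BC$ cut: writing $\ket{\psi}_{ABC}=\sqrt{\lambda_{0}}\ket{0}_{A}\ket{\beta_{0}}_{BC}+\sqrt{\lambda_{1}}\ket{1}_{A}\ket{\beta_{1}}_{BC}$ with $\braket{\beta_{0}|\beta_{1}}=0$, this yields $D_{G}^{A|BC}(\proj{\psi})=\|\Delta\|_{2}^{2}=2\lambda_{0}\lambda_{1}=1-\tr\rho_{A}^{2}$, where $\Delta:=\proj{\psi}-\Pi^{\star}(\proj{\psi})$.

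The key structural observation is that the dephasing on $A$ commutes with the partial traces over $B$ and $C$, so $\tr_{C}\Delta=\rho_{AB}-\Pi^{\star}(\rho_{AB})$ and $\tr_{B}\Delta=\rho_{AC}-\Pi^{\star}(\rho_{AC})$. Since $\Pi^{\star}$ is merely \emph{one} admissible measurement for each reduced state, it can only overestimate their discord, whence $\|\tr_{C}\Delta\|_{2}^{2}\ge D_{G}^{A|B}(\rho_{AB})$ and $\|\tr_{B}\Delta\|_{2}^{2}\ge D_{G}^{A|C}(\rho_{AC})$. Consequently the whole theorem follows once I establish the partial-trace contraction
\begin{equation}
\|\Delta\|_{2}^{2}\ge\|\tr_{C}\Delta\|_{2}^{2}+\|\tr_{B}\Delta\|_{2}^{2},\label{eq:plan-contraction}
\end{equation}
i.e.\ that the correlations destroyed globally dominate the sum of those destroyed on the two reductions; the chain $\|\Delta\|_2^2\ge\|\tr_C\Delta\|_2^2+\|\tr_B\Delta\|_2^2\ge D_G^{A|B}(\rho_{AB})+D_G^{A|C}(\rho_{AC})$ is then immediate.

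To reduce \eqref{eq:plan-contraction} I carry out the partial traces explicitly: $\Delta=\sqrt{\lambda_{0}\lambda_{1}}\,(\ket{0}\bra{1}_{A}\otimes R+\ket{1}\bra{0}_{A}\otimes R^{\dagger})$ with $R=\ket{\beta_{0}}\bra{\beta_{1}}_{BC}$, and the $A$ factor contributes a common prefactor $2\lambda_{0}\lambda_{1}$ to every term (the degenerate case $\lambda_{0}\lambda_{1}=0$ being trivial, as then $\rho_{AB},\rho_{AC}$ are products). Thus \eqref{eq:plan-contraction} collapses to the purely two-qubit statement $\|R\|_{2}^{2}\ge\|\tr_{C}R\|_{2}^{2}+\|\tr_{B}R\|_{2}^{2}$, where $R$ is the rank-one operator built from the two orthonormal Schmidt vectors $\ket{\beta_{0}},\ket{\beta_{1}}$ of $BC$, so that $\tr R=\braket{\beta_{1}|\beta_{0}}=0$ and $\|R\|_{2}^{2}=1$. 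Representing $\ket{\beta_{0}},\ket{\beta_{1}}$ by $2\times2$ matrices $B_{0},B_{1}$ through $\braket{jk|\beta}=(B)_{jk}$, one finds $\tr_{C}R=B_{0}B_{1}^{\dagger}$ and $\|\tr_{B}R\|_{2}=\|B_{0}^{\dagger}B_{1}\|_{2}$, while orthonormality reads $\tr(B_{0}^{\dagger}B_{0})=\tr(B_{1}^{\dagger}B_{1})=1$ and $\tr(B_{0}^{\dagger}B_{1})=0$. The target becomes $\|B_{0}B_{1}^{\dagger}\|_{2}^{2}+\|B_{0}^{\dagger}B_{1}\|_{2}^{2}\le1$.

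This last inequality is the technical heart of the argument, and where I expect the real work to lie: without the orthogonality constraint it is outright false (take $B_{0}=B_{1}=\proj{0}$), so the crux is to show that orthogonality is exactly strong enough to cancel the offending positive term. To this end I would use that all quantities above are invariant under $B_{i}\mapsto UB_{i}V$ for fixed unitaries $U,V$; choosing the singular-value decomposition of $B_{0}$ lets me assume $B_{0}=\mathrm{diag}(p,q)$ with $p^{2}+q^{2}=1$. Writing $B_{1}=\left(\begin{smallmatrix}a&b\\c&d\end{smallmatrix}\right)$, a direct computation gives $\|B_{0}B_{1}^{\dagger}\|_{2}^{2}+\|B_{0}^{\dagger}B_{1}\|_{2}^{2}=1+(2p^{2}-1)(|a|^{2}-|d|^{2})$, whereas the constraint $\tr(B_{0}^{\dagger}B_{1})=pa+qd=0$ forces $q^{2}|d|^{2}=p^{2}|a|^{2}$; substituting, the correction term equals $-\tfrac{(p^{2}-q^{2})^{2}}{q^{2}}|a|^{2}\le0$ (the case $q=0$ being handled directly), which settles the claim, with equality precisely when $B_{0}\propto I$ or $a=0$. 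I anticipate the main obstacle to be exactly this reduction to a tractable two-qubit operator inequality together with the verification that the orthogonality of the Schmidt vectors is sharp enough to enforce it.
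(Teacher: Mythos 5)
Your proof is correct, and its skeleton coincides with the paper's: both arguments choose the very same witness classical--quantum state, namely the dephasing of $\proj{\psi}_{ABC}$ in the eigenbasis of $\rho_A$ (the paper's $\sigma_{ABC}=\sum_i\Pi_A^i\rho_{ABC}\Pi_A^i$, since in the parametrization used there the computational basis of $A$ is the $A|BC$ Schmidt basis), both take the value $2\lambda_0\lambda_1$ of the left-hand side from the known pure-state formula (you should cite the same references the paper does, since your chain needs the equality $D_G^{A|BC}(\proj{\psi})=\|\Delta\|_2^2$, i.e.\ optimality of the Schmidt-basis dephasing, not merely the trivial ``$\le$'' direction), and both exploit that the dephased reductions are admissible, generally suboptimal, CQ states for the two reduced terms. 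Where you genuinely differ is the technical core. The paper plugs in Brun's canonical form of a three-qubit pure state and computes both sides explicitly, reducing monogamy to $c=1+[4a^{2}(1-a^{2})-1]\gamma^{2}\le 1$. You stay basis-free: after pulling out the common factor $2\lambda_0\lambda_1$ (with the degenerate case $\lambda_0\lambda_1=0$ dispatched separately), you condense everything into the partial-trace contraction $\|\tr_C R\|_2^2+\|\tr_B R\|_2^2\le\|R\|_2^2$ for the rank-one operator $R=\ket{\beta_0}\bra{\beta_1}$ built from orthonormal vectors, i.e., after matricization, $\|B_0B_1^\dagger\|_2^2+\|B_0^\dagger B_1\|_2^2\le 1$ under $\tr(B_i^\dagger B_j)=\delta_{ij}$; your SVD argument for this is correct (I verified the identity giving $1+(2p^{2}-1)(|a|^{2}-|d|^{2})$, the constraint substitution $p^{2}|a|^{2}=q^{2}|d|^{2}$, and the edge case $q=0$; moreover, specializing your correction term to the paper's parametrization reproduces exactly the paper's $c-1$, so the two computations agree). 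The trade-off: the paper's route is a short mechanical calculation once the canonical form is taken off the shelf; yours avoids any canonical form, isolates a clean unitarily invariant lemma that makes visible precisely where the two-qubit restriction enters (the $2\times 2$ SVD-plus-orthogonality step), yields the equality conditions as a by-product, and is the more natural launching point for the paper's open question about tripartite pure states in higher dimensions.
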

\begin{proof}
We notice that for proving the inequality in Eq. \eqref{eq:monogamy-pure}
it is enough to show that for any pure state $\ket{\psi}_{ABC}$ there exists
a classical-quantum state $\sigma_{ABC}$ such that
\begin{equation}
D_{G}^{A|BC}\left(\ket{\psi}\bra{\psi}_{ABC}\right)\geq\left\Vert \rho_{AB}-\sigma_{AB}\right\Vert ^{2}_2+\left\Vert \rho_{AC}-\sigma_{AC}\right\Vert ^{2}_2.\label{eq:monogamy-pure-1}
\end{equation}
This inequality then automatically implies inequality \eqref{eq:monogamy-pure},
as, due to the minimization in the geometric measure of discord,
the right-hand side of \eqref{eq:monogamy-pure} can only be smaller than or equal to the
right-hand side of \eqref{eq:monogamy-pure-1} .
In order to show the existence of the mentioned  classical-quantum state $\sigma_{ABC}$
we choose a specific parametrization
for a pure state of three qubits \cite{Brun2001}:
\begin{eqnarray}
\ket{\psi_{ABC}} & = & \sqrt{p}\ket{0}_{A}\left(a\ket{00}_{BC}+\sqrt{1-a^{2}}\ket{11}_{BC}\right)\nonumber \\
 &  & +\sqrt{1-p}\ket{1}_{A}\bigg[\gamma\big(\sqrt{1-a^{2}}\ket{00}_{BC}-a\ket{11}_{BC}\big)\nonumber \\
 &  & +f\ket{01}_{BC}+g\ket{10}_{BC}\bigg].
\end{eqnarray}
The real numbers $p$, $a$ and $f$ range between $0$ and $1$,
$g$ is complex with $0\leq f^{2}+\left|g\right|^{2}\leq1$, and $\gamma=\sqrt{1-f^{2}-\left|g\right|^{2}}$ is also real.

We proceed by evaluating the left-hand side of Eq. \eqref{eq:monogamy-pure-1},
using the explicit formula for pure states \cite{Luo2011,Girolami2011b}:
\begin{equation}
D_{G}^{A|BC}\left(\ket{\psi}\bra{\psi}_{ABC}\right)=2\left(1-p\right)p.
\end{equation}
In the next step we define the classical-quantum state $\sigma_{ABC}=\sum_{i=0}^{1}\Pi_{A}^{i}\rho_{ABC}\Pi_{A}^{i}$
with local projectors in the computational basis: $\Pi_{A}^{i}=\ket{i}\bra{i}_{A}$.
The evaluation of the right-hand side of Eq. \eqref{eq:monogamy-pure-1}
is straightforward:
\begin{equation}
\left\Vert \rho_{AB}-\sigma_{AB}\right\Vert ^{2}_2+\left\Vert \rho_{AC}-\sigma_{AC}\right\Vert ^{2}_2=2c\left(1-p\right)p
\end{equation}
 with $c=1+[4a^{2}\left(1-a^{2}\right)-1] \gamma^{2}$.
The proof is complete, if we can show that $c$ cannot be larger than
$1$. This can be seen by noting that the term $4a^{2}\left(1-a^{2}\right)$
is maximal for $a^{2}=\frac{1}{2}$, which leads to the maximal possible
value $c=1$.
\end{proof}

Even though quantum correlations beyond entanglement cannot be monogamous
in general, Theorem \ref{thm:3} demonstrates that for pure states of three qubits
monogamy of the geometric measure of discord is still preserved. To the best of our knowledge this is the first instance of a measure of quantum correlations beyond entanglement that satisfies a restricted monogamy inequality.
Certainly, this is {\em not} a  property which all measures of quantum correlations have in common:
As shown e.g. in Ref. \cite{Giorgi2011},  the original quantum discord
violates monogamy even on some pure states of three qubits.

In conclusion, we have addressed the question of monogamy for quantum correlations beyond entanglement. Using very general arguments, we have proven that any measure of correlations which is nonzero on some separable state unavoidably violates monogamy.  Furthermore, we have shown that any monogamous  measure of quantum correlations must be maximal on pure states. These results imply severe constraints on any monogamous measure of quantum correlations, and can also be used to witness the violation of monogamy. Finally we have shown  that even though all measures of nonclassical correlations akin to quantum discord cannot be monogamous for all states, they still may obey monogamy in certain restricted situations. In particular, we proved that the geometric measure of discord is monogamous for all pure states of three qubits. It is an open question whether there exists a measure of general quantum correlations which is monogamous for tripartite pure states of arbitrary dimensions. Another open question, which points to a possible future research direction, arises from the generalization of quantum discord to theories which are more general than quantum \cite{Perinotti2012}. We hope that the results presented in this paper are also useful for this more general scenario. - Thus, the answer to the question posed in the title is: General quantum correlations are in general not monogamous.

\emph{Acknowledgements:} We thank Davide Girolami and Hermann Kampermann for discussions.
MP acknowledges support by NSERC, CIFAR, Ontario Centres of Excellence. GA is supported by a Nottingham Early Career Research and Knowledge Transfer Award. MP and GA acknowledge joint support by the EPSRC Research Development Fund (Pump Priming grant 0312/09). GA acknowledges ESF for sponsoring the workshop during which this work was started. DB and AS acknowledge financial support by DFG; AS was supported by ELES.

\bibliographystyle{apsrev4-1}
\bibliography{literature}

\end{document}